\newtheorem{remark}{\bf{Remark}}
\newtheorem{lemma}{\bf{Lemma}}
\begin{document}

%

\title{\LARGE{Energy Efficient Beamforming for Massive MIMO Public Channel}
}
%
%
%
\author{Cheng~Zhang,~\IEEEmembership{Student Member,~IEEE,}
        Yongming~Huang,~\IEEEmembership{Member,~IEEE,}
        Yindi~Jing,~\IEEEmembership{Member,~IEEE,}
        and~~Luxi~Yang,~\IEEEmembership{Member,~IEEE}
        \vspace{-12mm}
\thanks{
This work has been accepted by IEEE Transactions on Vehicular Technology.


C. Zhang, Y. Huang, and L. Yang are with the School of Information Science and Engineering, Southeast University, Nanjing 210096, P. R. China (email:
{zhangcheng1988, huangym, lxyang}@seu.edu.cn).

Y. Jing is with the Department of Electrical and Computer Engineering, University of Alberta, Edmonton, Canada, T6G 2V4 (email: yindi@ualberta.ca).}
}

%
%

\markboth{}
{Shell \MakeLowercase{\textit{et al.}}: Bare Demo of IEEEtran.cls for Journals}
%



\maketitle

\begin{abstract}
For massive MIMO public channel with any sector size in either microwave or millimeter wave (mmwave) band, this paper studies the beamforming design to minimize the transmit power while guaranteeing the quality of service (QoS) for randomly deployed users. First the ideal beampattern is derived via Parseval Identity, based on which a beamforming design problem is formulated to minimize the gap with the idea beampattern. The problem is transformable to a multiconvex one and an iterative optimization algorithm is used to obtain the full-digital beamformer. In addition, with the help of same beampattern theorem, the power amplifier (PA) efficiency of the beamformer is improved with unchanged beampattern. Finally, the practical hybrid implementation is obtained that achieves the full-digital beamformer solution. Simulations verify the advantages of the proposed scheme over existing ones.
\end{abstract}

\begin{IEEEkeywords}
Massive MIMO, public channel, energy efficient, beamforming, beampattern.
\end{IEEEkeywords}\vspace{0mm}

%
\IEEEpeerreviewmaketitle

\vspace{-5mm}
 \section{Introduction}
In massive MIMO systems,
multi-user beamforming in dedicated channels has been investigated intensively \cite{Hoydis,chengzhang}. But designs for public channels are relatively limited. Public channels play crucial roles in broadcasting essential synchronization, reference, and control signals \cite{xinmeng1}. Public channel transmissions can be divided into two categories \cite{Meng_Omnidirectional}: closed-loop and open-loop. For closed-loop approaches, instantaneous or statistic channel state information (CSI) is assumed at the BS and the transmission is optimized to be adaptive to the CSI, e.g., choosing the optimal precoding matrix to maximize the worst-case receiving signal-to-noise ratio (SNR) \cite{Sidiropoulos-Transmit,Lozano-long}. Open-loop approaches assume no CSI and the BS broadcasts common information blindly to users. For some applications, quality CSI is difficult to obtain (e.g., users may be silent for a long time \cite{Larsson_Joint}). Meanwhile, when the users' covariance matrices are largely different, CSI-based designs can have higher complexity and little performance superiority. Thus, this work considers open-loop schemes \cite{xuezhiyang,xinmeng1}.

Several beamforming designs have been proposed in existing literature for public channel communications in conventional MIMO systems, e.g., global search, randomization, and communication standard based schemes (see \cite{xuezhiyang} and references therein). Unfortunately, these methods may be incompatible with massive MIMO systems \cite{xinmeng1}. A new design for massive MIMO public channel can be found in \cite{xinmeng1}. Based on a large-scale approximation of the channel correlation matrix, Zadoff-Chu (ZC) sequence was utilized to design the beamformer which aims to make the signal powers at $M$ (which is the antenna number at the BS) discrete angles equal. But for large but finite antennas, $M$ discrete angles is insufficient in representing the continuous sector, so the design has performance degradation for  unobserved angles. Moveover, the ZC scheme is dedicated for the standard sector only, e.g., $[-90^\circ,90^\circ]$, with the antenna spacing ratio being 0.5. Its direct application to the adaptive sectorization scenario seems infeasible. In \cite{yeli}, by using the same beampattern theorem \cite{same_beam}, a broadbeam design was proposed where a reference beampattern is first determined, then the beamformer is obtained by solving a polynomial equation to approach the reference beampattern. Again, only the standard sector was considered and the polynomial coefficients tend to be inaccurate when the number of discrete angles for optimization is larger than $2M-1$.

Beside the widely adopted single-beam scheme mentioned above, another approach for massive MIMO public channel is precoding using space-time block
coding, which can have both omnidirectional coverage and high diversity performance \cite{Meng_Omnidirectional}. But generally speaking, this scheme has higher complexity due to the channel estimation overheard at the receiver side,  involved modulation designs, and the decoding. Another interesting direction is the joint dedicated channel and public channel design \cite{Larsson_Joint}. But this may not achieve better performance over   orthogonal access (where dedicated channel and public channel are given distinct time-frequency resources) for practical massive MIMO systems \cite{Larsson_Joint}. Thus, we focus on single-beam transmission design in public channel only.

For massive MIMO communications, two practical constraints are emphasized. First, due to the high cost of radio frequency (RF) hardware, especially for the millimeter wave (mmwave) band, having one high-resolution digital RF chain for each antenna tends to be infeasible. RF chains with low-resolution components \cite{Zhang_On}, \cite{Zhang_Performance} and the hybrid beamforming structure with less RF chains than antennas \cite{Spatially}, \cite{molisch1} are considered to be cost effective and have little performance loss compared with their full-digital counterpart. Second, the demand for high energy efficiency for 5G  makes both the constraint on the total transmit power and the power efficiency of each antenna important for massive MIMO systems \cite{jianjun}.

This paper is on the beamforming design of the low-complexity single-beam transmission for public channel in massive MIMO systems with the energy efficient hybrid structure and arbitrary sector size. The problem is formulated as the transmit power minimization under quality-of-service (QoS) constraint\footnote{The outage probability is used as the QoS metric. For massive MIMO public channel, short packet transmission is envisioned due to the demand for low complexity and continuously reliable decoding \cite{Karlsson_Broadcasting}. Such short packet usually only spans one channel realization, consequently, the outage performance is more relevant than the ergodic rate performance.} with additional consideration on the power amplifier (PA) efficiency.
In solving the beamforming design problem, first the ideal beampattern is derived, then a full-digital beamformer optimization is formulated as finding the beamformer whose beampattern has the minimum gap with the idea one.
The optimization is solved by an efficient iterative algorithm based on semi-definite programming (SDP). Subsequently, by drawing lessons from \cite{same_beam,yeli,molisch1,jianjun}, we utilize the same beampattern theorem to improve the beamformer's PA efficiency and decomposed the beamformer into the product of a baseband beamformer and an analog one to be applicable to the hybrid massive MIMO structure. Simulations show that the proposed design has superior performance over existing schemes.

The main difference of this work with existing ones is two-fold. First, the problem formulation is fundamentally  different. In existing work,  equal power at some discrete angles is required \cite{xinmeng1}, \cite{yeli}. Our formulation is from the perspective of the QoS guarantee, the energy efficiency\footnote{The term \textit{energy efficiency} is generally defined as the the ratio of spectrum efficiency to the consumed energy. In this paper, it refers to the transmit power minimization with guaranteed QoS.}, and the hybrid structure. Second, the proposed design procedure explores the ideal beampattern result and the FIR filter theorem. Compared with the ZC scheme \cite{xinmeng1}, the proposed one is not limited to the performance for $M$ discrete angles and is adaptive to the sector size. Compared with the broadbeam design \cite{yeli}, the proposed design does not require predefined reference beampattern whose design is non-trivial especially for arbitrary sector size. Moreover, the use of convex optimization tools enables the searching over a broader region compared with the polynomial equation solving in \cite{yeli}. Meanwhile, the strict requirement on the number and locations of the discrete angles for optimization can also be relaxed with the proposed design.

\vspace{-3mm}
\section{System Model}
Consider a massive MIMO BS\ with hybrid beamforming structure, where the number of antennas is $M$ and the number of RF chains is $N_{RF}$. Let $D$ be the antenna spacing ratio, $[\Theta _{1 },\Theta _{2 }]$ be the radiating interval of each antenna, and $\mathcal{A}\triangleq\left[ {{\Theta _{\min }},{\Theta _{\max }}} \right]$ be the sector of interest inside the antenna radiating interval. Assume that the ideal directional radiating antenna is employed at the BS, i.e., the signal from outside of $[\Theta _{1 },\Theta _{2 }]$ is completely suppressed by the antenna pattern \cite{xinmeng1}. To avoid angle ambiguity \cite{array}, it is assumed that
$\Theta _{1 }\ge-\pi/2$,   $\Theta _{2}\le \pi/2$  for $D \le 0.5$; and
$\Theta _{1 }\ge-\arcsin\left( {\frac{1}{2D}} \right)$,  $\Theta _{2}\le\arcsin\left( {\frac{1}{2D}} \right)$ for $D > 0.5$.
For the public channel, common messages are sent to all  users randomly distributed in the sector $\mathcal{A}$. No CSI is available at the BS.

The received signal at User $k$ is
\begin{equation}\label{eq_1}
\small
  y_k = \sqrt \rho  {{\bf{h}}_k^H}{\bf{w}}x + z_k,
\end{equation}
where ${\bf h}_k\in {{\mathbb C}^{M \times 1}}$ is the channel vector, ${\bf{w}}\in {{\mathbb C}^{M \times 1}} $ is the beamformer normalized as ${\bf{w}}^H{\bf{w}}=1$,  $\rho$ is the average transmit power, $x\sim \mathcal {CN}(0,1)$ is the common signal symbol, and $z_k \sim \mathcal{ CN}\left( {0,1} \right)$ is the local noise. The notation $\mathcal{ CN}\left( {0,1} \right)$ represents the circularly symmetric complex Gaussian distribution with zero-mean and unit-variance and $x\sim \mathcal{ CN}\left( {0,1} \right)$ means that $x$ follows such distribution. With hybrid structure, ${\bf{w}}= {{\bf{W}}_{RF}}{{\bf{w}}_{BB}}$, where ${{\bf{W}}_{RF}} \in {{\mathbb C}^{M \times {N_{RF}}}}$ is the analog beamforming matrix with the constraint $\left|[{\bf{W}}_{RF}]_{i,j}\right|=1$ and ${{\bf{w}}_{BB}} \in {{\mathbb C}^{{N_{RF}} \times 1}}$ is the baseband beamformer, where $[{\bf{W}}_{RF}]_{i,j}$ is the $(i,j)$-th entry of ${\bf{W}}_{RF}$.

The spatially correlated channel is considered, i.e.,
\begin{equation}
\label{Eq2}
{{\bf{h}}_k} = {\bf{R}}_k^{1/2}{\bf{h}}_k^{iid},
\end{equation}
where ${\bf h}_k^{iid}\sim{{\cal {CN}}({\boldsymbol 0,\boldsymbol I_{M}})}$ is the fast fading channel component and ${\bf R}_k\in {{\mathbb C}^{M \times M}}$ is the channel covariance matrix. For the one-ring scattering model under a far-field assumption, the channel covariance matrix  ${\bf R}_k$ can be modeled as
\begin{equation}
\label{eqn6}
{\bf{R}}_k = \int_{{{\Theta _{1 }}}}^{{{\Theta _{2}}} } {f_k(\theta ){\boldsymbol \alpha }(\theta ){{\boldsymbol{\alpha }}^H}(\theta )} d\theta ,
\end{equation} where $f_k(\theta)$ is the power azimuth spectrum (PAS) \cite{xinmeng1} which indicates the joint effect of each antenna's gain pattern and the channel scattering distribution and ${\boldsymbol \alpha }(\theta )$ is the array vector for the physical angle $\theta$. Assume uniform linear array at the BS\footnote{Our results can be straightforwardly extended to other antenna array topologies such as uniform planar arrays or uniform circular arrays.}, we have
${\boldsymbol{\alpha }}\left( \theta  \right) = {\left[ {1,...,{e^{ - j2\pi D\sin \left( \theta  \right)\left( {M - 1} \right)}}} \right]^T}.$

Denote the set of all possible user PAS is as follows:
\begin{eqnarray}
\mathcal{F}\triangleq\left\{f(\theta)\bigg{|}
\begin{array}{ll} \int_{\theta^{min} }^{\theta^{max}} \hspace{-1mm}f(\theta )d\theta = 1,\\
f(\theta )=0 \text{ for } \theta\hspace{-0.5mm}\notin\hspace{-0.5mm} \left[ {\theta^{min} ,\theta^{max} } \right] \hspace{-0.5mm}\subseteq \hspace{-0.5mm}
\mathcal{A} \end{array}
 \right\}.
 \label{PAS}
\end{eqnarray}
In the constraint in (\ref{PAS}), $\left[ \theta^{min}, \theta^{max}  \right] \subseteq \mathcal{A}$ is the channel angular spread (AS) interval and the first part is for the normalization. Users with different PAS have different channel covariance matrices, thus different channel distribution. Notice that for User k, when its PAS $f_k(\theta)\in \mathcal{F}$, the condition in (\ref{PAS}) implies ${\rm tr}\{{\bf R}_k\}={M}$. While this normalization does not incorporate the large-scaling fading of user channel, it can be seen as considering users with the worst large-scale fading. Further, by properly setting $f_k(\theta)$, many channel types can be modeled. For example, if $f_k(\theta)$ is set to be a Dirac delta function, i.e., $\left[ \theta_k^{min}, \theta_k^{max}\right]$ is an extremely small interval, the corresponding channel is a single-path one typical in the mmWave band.

\vspace{-2.5mm}
\section{Problem Formulation and Solution Procedure}
In this section, the beamforming design problem is first formulated. Then the solution procedure is provided.
\vspace{-3.5mm}
\subsection{Problem Formulation}
For the public channel transmission, the beamforming design needs to guarantee the worst performance within the sector. The outage probability, $P_{out}$, is chosen to be the performance measure. In addition, the energy efficiency is important especially from the perspective of communication carriers. Therefore, the energy efficient hybrid beamforming design problem for massive MIMO public channel is formulated as
\begin{equation*}\label{eq16}
\small
\begin{aligned}
\textbf{P1}: &\underset{{\bf{w}}}{\min}
& & \rho \\
& \text{s.t.}
& & \hspace{-2mm}\mathop {\max }_{f_k(\theta)\in\mathcal{F}}\limits\{P_{out}\left(\rho, R, {\bf w}, f_k(\theta) \right)\} \le {\bar P}_{out}, \forall k, \hspace{20pt}\text{(C.1)}\\
&&& {\bf{w}} = {{\bf{W}}_{RF}}{{\bf{w}}_{BB}} \text{ and } |{[{\bf W}_{RF}]}_{i,j}|=1, \forall i,j, \text{\hspace{16pt}(C.2)}\\
&&& {\bf w}^H{\bf w}=1.\text{\hspace{140pt}(C.3)}
\end{aligned}
\end{equation*}where ${{\bar P}_{out}}$ is the given maximum acceptable outage probability and $R$ is the minimum rate. In P1, the transmit power is the optimization metric and the rate constraint $R$ functions through the outage probability condition in (C.1). Define the \textit{beampattern} at the angle  $\theta$ as
\begin{equation}\label{10_beampattern}
\small
g\left( \theta  \right) \triangleq {{\bf{w}}^H}{\boldsymbol{\alpha }}\left( \theta  \right){{\boldsymbol{\alpha }}^H}\left( \theta  \right){\bf{w}}.
\end{equation}
The outage probability can be calculated as
{\small
\setlength{\arraycolsep}{1pt}
\begin{eqnarray}\label{eq_17_1}
&&\hspace{-0.3cm}P_{out}\left(\rho, R, {\bf w}, f_k(\theta) \right) \nonumber \\
&&\hspace{-0.3cm}= {\rm Pr}\left\{ {\log_2\left( 1 + \rho |{\bf h}_k^H{\bf w}|^2 \right) < R} \right\} \nonumber \\
   &&\hspace{-0.3cm}\mathop {\rm{ = }}\limits^{\left( a \right)} {\rm{Pr}}\left\{ {{{\left| {h_k^{iid}} \right|}^2} \int_{{\Theta_{min}}}^{{\Theta_{max}}} {{f}_k\left( \theta  \right)g\left( \theta  \right)d\theta }  < \frac{{{2^R} - 1}}{\rho }} \right\} \nonumber \\
   &&\hspace{-0.3cm}\mathop {\rm{ = }}\limits^{\left( b \right)}\hspace{-0.1cm} {\rm{Pr}}\hspace{-0.1cm}\left\{\hspace{-0.1cm} {{{\left| {h_k^{iid}} \right|}^2} \hspace{-0.2cm}\int_{{\sin(\Theta_{min})}}^{{\sin(\Theta_{max})}}\hspace{-0.2cm} {\frac{{f}_k\left( \arcsin(x)  \right)g\left( \arcsin(x)  \right)}{\sqrt{1-x^2}}dx }  \hspace{-0.1cm}<\hspace{-0.1cm} \frac{{{2^R} \hspace{-0.1cm}- \hspace{-0.1cm}1}}{\rho }} \hspace{-0.1cm}\right\}\hspace{-0.1cm},
\end{eqnarray}
\setlength{\arraycolsep}{5pt}}where $h_k^{iid}\sim{{\cal {CN}}({0,1})}$. (a) follows from \eqref{Eq2}, the eigen-decomposition of ${{\bf{R}}_k^{{1 \mathord{\left/
 {\vphantom {1 {2,H}}} \right.
 \kern-\nulldelimiterspace} {2,H}}}}{\bf{w}}{{\bf{w}}^H}{{\bf{R}}_k^{{1 \mathord{\left/
 {\vphantom {1 2}} \right.
 \kern-\nulldelimiterspace} 2}}}$ and its rank-1 property, and ${{{\bf{w}}^H}{\bf{R}}_k{{\bf w}}}= \int_{{\Theta_{min}}}^{{\Theta_{max}}} {{f}_k\left( \theta  \right)g\left( \theta  \right)d\theta }$. (b) follows from transforming the angle domain $\theta$ to the normalized \textit{spatial frequency} domain $\sin(\theta)$. Recall the normalization ${\rm tr}\{{\bf R}_k\}={M}$. When the worst case design is considered, $\rho$ can be understood as the normalized transmit power with respect to the multiplication of noise power and the worst case path-loss.


\subsubsection{Additional requirement}
Besides minimizing the transmit power, another important factor affecting the system energy efficiency is the PA efficiency which is quantified by the antenna-domain peak-to-average power ratio (PAPR)\footnote{While time-domain PAPR is typically used for orthogonal frequency division multiplexing (OFDM) systems, we focus on the antenna-domain PAPR which is important for massive MIMO public channel transmissions.} in this paper, i.e.,
\begin{equation}
\small
\delta=\frac{M\max_m{|w_m|^2}}{\|{\bf w}\|^2}.
\end{equation}
One way to incorporate PA efficiency is to directly set a constraint on the PAPR in P1, or formulate a double-objective problem by adding the minimization of $\delta$. But this makes the problem intractable
\cite{global}. Alternatively, we provide an indirect solution for this and make it as an additional consideration. More details are provided in Section \ref{3 steps} and \ref{PAPR reduction}.
\vspace{-4mm}
\subsection{Solution Procedure}\label{3 steps}
Due to the complicated condition in  (C.1) and the non-convexity of (C.2), to directly solve P1 is difficult. Instead, the following procedure with four steps is proposed:
\begin{enumerate}
\item Find the optimal beampattern $g^{\star}\left( \theta  \right)$ to minimize the  transmit power under the outage probability constraint only.
\item Find a full-digital beamformer, ${\bf w}_1^{\star}$, whose beampattern closely matches the optimal one found in Step 1.
\item  Find a beamformer ${\bf w}_2^{\star}$ with the same beampattern as that of ${\bf w}_1^{\star}$ but with a lower PAPR.
  \item Decompose ${\bf w}_2^{\star}$ into ${\bf W}_{RF}^{\star}$ and ${\bf w}_{BB}^{\star}$ to complete its hybrid implementation.
\end{enumerate}

In the following sections, details of each step will be provided. Step 3 is used to improve the PA efficiency of the object beamformer. If Steps 1, 2, and 4 can be solved precisely, especially for Step 2, i.e., a full-digital beamformer whose beampattern is the same as $g^{\star}\left( \theta  \right)$ can be found, the procedure will lead the optimal solution of P1. However, as will be explained in the following sections, Steps 1 and 4 can be solved precisely, but due to the FIR filter theorem, the ideal beampattern is unattainable given the finite dimension of ${\bf w}$. Thus, in Step 2, the goal is to find a beamformer whose beampattern closely matches the optimal one.

\vspace{-2mm}
\section{Solution Details}
This section provides the detailed formulations and solutions for the proposed four-step procedure.

\vspace{-3mm}
\subsection{Ideal Beampattern}\label{idealbeampattern}
The first step is to find the optimal beampattern that minimizes the transmit power under the outage probability constraint. It can be written as the following:
\begin{equation*}
\small
\begin{aligned}
\textbf{P2}:\ & \underset{g\left( \theta  \right)}{\text{minimize}}
& & \rho \\
& \hspace{5pt}\text{s.t.}
& & \hspace{-25pt}\mathop {\max }_{f_k(\theta)\in \mathcal{F}}\limits\{P_{out}\left(\rho, R, g\left( \theta  \right), f_k(\theta) \right) \} \le {\bar P}_{out}, \forall k, \hspace{12pt}\text{(C.1)}\\
&&& \hspace{-20pt}{\bf w}^H{\bf w}=1. \text{\hspace{143pt}(C.3)}\\
\end{aligned}
\end{equation*}The optimal beampattern $g^\star\left( \theta  \right)$ for P2, also called the \textit{ideal beampattern},  is given in the following lemma.
\begin{lemma}
\label{corollary 1}
The optimal beampattern for P2 is
\begin{equation}
\label{eq_15}\small
\begin{array}{l}
g^\star\left( \arcsin(x)  \right) = \left\{ {\begin{array}{*{20}{c}}
{\frac{1}{D(\sin(\Theta _{\max })-\sin(\Theta _{\min }))},x  \in \mathcal{\tilde{A}}}\\
{0,\hspace{86pt}x \in  \mathcal{\tilde{A}}^{-}}
\end{array}} \right.
\end{array},
\end{equation}
where $\xi^\star\triangleq(D(\sin(\Theta _{\max })-\sin(\Theta _{\min })))^{-1}$ is the beampattern value within the sector, $\mathcal{\tilde{A}} \triangleq \left[ {{\sin(\Theta _{\min })},{\sin(\Theta _{\max })}} \right]$ is the sector interval in the spatial frequency domain, and $\mathcal{\tilde{A}}^{-}\triangleq \left[ \sin(\Theta_1),\sin({\Theta _{\min }}) \right) \cup \left( \sin({{\Theta _{\max }}}),\sin(\Theta_2) \right]$ is the out-sector interval.
\end{lemma}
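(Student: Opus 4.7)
The plan is to eliminate $\rho$ and reduce P2 to a max-min over the beampattern $g$, then invoke Parseval's identity to pin down the feasible set, and finally solve the resulting max-min by the average-of-an-interval argument.

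First I would observe from \eqref{eq_17_1} that $|h_k^{iid}|^2$ is exponentially distributed, so $P_{out}$ is a strictly decreasing function of the scalar $Q(g,f_k)\triangleq\int_{\Theta_{\min}}^{\Theta_{\max}} f_k(\theta)g(\theta)\,d\theta$. Since $Q$ is linear in $f_k$ and $f_k$ ranges over probability densities supported in $\mathcal{A}$ (delta-type PAS being admissible, as the paper notes for mmWave), the worst case is a Dirac mass located where $g$ is smallest, giving
\begin{equation*}
\min_{f_k\in\mathcal{F}} Q(g,f_k) \;=\; \min_{\theta\in\mathcal{A}} g(\theta) \;=\; \min_{x\in\mathcal{\tilde{A}}} g(\arcsin x).
\end{equation*}
Inverting the exponential outage formula, (C.1) then becomes $\rho \ge \gamma/\min_{x\in\mathcal{\tilde{A}}} g(\arcsin x)$ for a positive constant $\gamma$ depending only on $R$ and $\bar P_{out}$, so P2 collapses to maximizing $\min_{x\in\mathcal{\tilde{A}}} g(\arcsin x)$ over beampatterns feasible under (C.3).

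Second, I would characterize feasibility. Under the change of variables $u=\sin\theta$, the function $\tilde g(u)\triangleq g(\arcsin u)=|\sum_{m=0}^{M-1} w_m^{\ast} e^{-j2\pi D m u}|^{2}$ is $1/D$-periodic, and Parseval applied to (C.3) yields
\begin{equation*}
\int_{-1/(2D)}^{1/(2D)} \tilde g(u)\,du \;=\; \frac{1}{D}.
\end{equation*}
Mass placed outside the radiating interval $[\sin\Theta_1,\sin\Theta_2]$ is physically suppressed by the antenna pattern and merely consumes the Parseval budget, while mass placed in $\mathcal{\tilde{A}}^{-}$ lies inside the radiating interval but outside the sector and therefore cannot raise the worst-case on $\mathcal{\tilde{A}}$ either. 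Hence without loss of optimality $\tilde g$ is supported in $\mathcal{\tilde{A}}$ with $\int_{\mathcal{\tilde{A}}} \tilde g(u)\,du \le 1/D$.

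Finally, the remaining problem is to maximize $\min_{u\in\mathcal{\tilde{A}}}\tilde g(u)$ over nonnegative $\tilde g$ of total mass at most $1/D$ on an interval of length $\sin\Theta_{\max}-\sin\Theta_{\min}$. Since the infimum of a nonnegative function over an interval never exceeds its average, the optimum value is $[D(\sin\Theta_{\max}-\sin\Theta_{\min})]^{-1}$, attained iff $\tilde g$ is constant on $\mathcal{\tilde{A}}$; together with the zero-outside conclusion of step two, this is exactly \eqref{eq_15}. The main subtlety I foresee is in step two: one has to handle the Parseval window carefully (a full $1/D$-period versus the physical radiating interval, which coincide only when $D\le 1/2$ and $[\Theta_1,\Theta_2]=[-\pi/2,\pi/2]$) and rule out any optimum that wastes the budget on $\mathcal{\tilde{A}}^{-}$ or on the suppressed region; once that accounting is done, the max-min step itself is immediate.
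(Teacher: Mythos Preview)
Your proposal is correct and follows essentially the same route as the paper's proof: Parseval's identity yields the integral budget $\int_{\tilde{\mathcal A}}\tilde g(u)\,du\le 1/D$, optimality forces $\tilde g$ to vanish on $\tilde{\mathcal A}^{-}$, and constancy on $\tilde{\mathcal A}$ maximizes the worst case. The only cosmetic difference is that you first reduce the worst-case over $f_k\in\mathcal F$ to $\min_{x\in\tilde{\mathcal A}}\tilde g(x)$ via a Dirac PAS and then apply the ``min~$\le$~average'' inequality, whereas the paper argues by contradiction (a sub-level set of $g$ supports a PAS with strictly worse outage) and then pushes the constant up against the Parseval bound; both arguments are equivalent, and your explicit attention to the Parseval window versus the radiating interval is a point the paper leaves implicit.
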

\begin{proof}
See Appendix A.
\end{proof}

\begin{remark} Ideal beampattern has been used in public channel beamforming design in \cite{xinmeng1}, \cite{yeli}.  But the proposed result in Lemma \ref{corollary 1} has distinctions in two aspects. First, beampatterns in \cite{xinmeng1}, \cite{yeli} are for the standard sector, e.g., $\mathcal{A}=[-90^\circ,90^\circ]$ and $D=0.5$ only; while our result is applicable for any sector size. Also, the beampattern in \cite{xinmeng1} is defined for only $M$ discrete angles. Second, beampatterns in \cite{xinmeng1}, \cite{yeli} are derived from the requirement of constant signal power at any discrete angles; while our result links the beampattern to direct communication objects.
\end{remark}

\begin{remark}
The result in Lemma \ref{corollary 1} shows that the ideal beampattern is independent of the outage parameters $R$ and ${\bar{P}}_{out}$ even though they appear in the Condition  (C.1) of P2. This is an interesting observation yet still reasonable. The objective in P2 is to minimize the transmit power $\rho$ with guaranteed outage performance of all possible users. Regardless of the specific value of $R$ and ${\bar{P}}_{out}$, the solution of P2 provides a flat beam strength to the targeted angle interval.
\end{remark}

\begin{remark}
When entries of ${\bf h}_k$ are independent and identically distributed (i.i.d.), the frequency domain PAS of User $k$, ${f_k(\arcsin(x))}/{\sqrt{1-x^2}}$, is  flat for $x \in \left[ {{\sin(\Theta _{\min })},{\sin(\Theta _{\max })}} \right]=\left[-{1}/{(2D)}, {1}/{(2D)}\right]$  \cite{xinmeng1,chengzhang}. From \eqref{eq_17_1} and \eqref{eq_14}, the ideal beampattern condition simplifies to $\int_{{\sin(\Theta _{\min })}}^{{\sin(\Theta _{\max })}} {g\left( \arcsin(x)  \right)} dx =1/D$, which is equivalent to ${\bf w}^H{\bf w}=1$. Compared with the case of spatially correlated channel, the demand for flatness of the beampattern is relaxed. This is because in this case all users share the same uniform scatterer distributed in the sector (in frequency domain) while for spatially correlated channel the scatterer distribution is different for different users.
\end{remark}

\vspace{-3mm}
\subsection{Ideal Beampattern Based Beamformer Design}\label{optimize w1}
The second step of the solution procedure is to find a full-digital beamformer, ${\bf w}_1^{\star}$, whose beampattern closely matches the optimal one. Recall the equivalence between the beampattern and the FIR filter response as shown in \eqref{equivalence_170323} of the proof for Lemma \ref{corollary 1}. Due to the FIR filter theorem \cite{Oppenheim}, there does not exist an $M$-dimensional vector ${\bf w}$ whose beampattern equals $g^\star\left( \theta  \right)$ in \eqref{eq_15}.
Specifically, the ripple in both $\mathcal{\tilde{A}}$ (pass band) and $\mathcal{\tilde{A}}^{-}$ are unavoidable. Moreover, the transition bandwidth $\Delta_T$ should be non-zero\footnote{Exception is for the full-pass filter, i.e., $\mathcal{\tilde{A}}=[\sin(\Theta_1),\sin(\Theta_2)]$, which does not need a transition band.} in practice. Thus $\mathcal{\tilde{A}}^{-}$ should be divided into two parts, i.e., the transition band and the stop band, in which the latter is $\mathcal{\tilde{A}}_s^{-}\triangleq\left[ \sin(\Theta_1),\sin({\Theta _{\min }})-\Delta_T \right) \cup \left( \sin({{\Theta _{\max }}})+\Delta_T,\sin(\Theta_2) \right]$.

Based on the FIR filter theorem, there is an  inherent tradeoff between the ripples in $\mathcal{\tilde{A}}$ and the size of transition bandwidth $\Delta _T$. Specifically, the ripple size in $\mathcal{\tilde{A}}$ is inversely proportional to $\Delta _T$ and the minimum beampattern decreases with increasing ripple for constant mean beampattern in $\mathcal{\tilde{A}}$. On the other hand,  due to the Parseval identity used in the proof for Lemma \ref{corollary 1}, a larger $\Delta _T$ may decrease the mean beampattern in $\mathcal{\tilde{A}}$, which decreases the minimum beampattern with constant ripple in $\mathcal{\tilde{A}}$.

By taking into consideration of the aforementioned phenomenon, in what follows, a formulation for the optimization of the full-digital beamformer is proposed to explore the tradeoff in the equivalent filter design problem.
\begin{equation*}\label{eq16}
\small
\begin{aligned}
\textbf{P3}: & \underset{{\bf{w}}_1}{\text{minimize}}
& & \sigma \\
& \hspace{23pt}\text{s.t.}
& & \left|{\left| {{{\bf{w}}_1^H} {\boldsymbol{\alpha}}(\arcsin (x))} \right|  - \sqrt{\xi^\star} } \right| \le \sigma ,x  \in \mathcal{\tilde{A}},\text{  ($\bar {\rm C}$.1)}\\
&&& {\left| {{{\bf{w}}_1^H} {\boldsymbol{\alpha}}(\arcsin (x))} \right|}  \le \sqrt{r_s}, x \in \mathcal{\tilde{A}}_s^{-},\text{\hspace{20pt}  ($\bar {\rm C}$.2)}\\
&&& 0< \Delta_T \le \Delta_T^{max}. \text{\hspace{90pt}  ($\bar {\rm C}$.3)}\\
&&& {{\bf{w}}_1^H}{\bf{w}}_1 \le 1. \text{\hspace{112pt}  ($\bar {\rm C}$.4)}
\end{aligned}
\end{equation*}
In this formulation, the square root of beampattern is used for simplicity. In ($\bar {\rm C}$.1), the gap between the beampattern of ${\bf{w}}_1$ and the ideal one ($\xi^\star$) in $\mathcal{\tilde{A}}$ should be lower than $\sigma$. In ($\bar {\rm C}$.2), to reduce possible inter-sector interference, a constraint on the maximum beampattern $r_s$ in $\mathcal{\tilde{A}}_s^{-}$ is required. Also from this perspective, we set another ($\bar {\rm C}$.3) on the maximum $\Delta_T$ (denoted as $\Delta_T^{max}$). In ($\bar {\rm C}$.4), ${{\bf{w}}_1^H}{\bf{w}}_1 \le 1$ is the convex relaxation version of ${{\bf{w}}_1^H}{\bf{w}}_1 =1$, which does not affect the optimization result since the equality can always be satisfied due to the structure of P3. Note that some finite discrete angles within $\mathcal{\tilde{A}}\cup\mathcal{\tilde{A}}_s^{-}$ should be selected for the optimization to be feasible.


\begin{remark}\label{remark5}
In \cite{yeli}, although the original object is to guarantee the performance of the whole continuous sector, only no more than $2M-1$ carefully designed angles are selected and the optimization is based on these discrete angles only.
Since P3 is solved numerically, angle discretization is still needed. But our method allows an arbitrary number of discrete angles and arbitrary selections of discrete angles. In fact, for high precision, a large number of discrete angles are considered to have a small angle spacing.
Meanwhile, the reference beampattern used for the beamforming design in \cite{yeli} is fixed as $[-90^\circ,90^\circ]$ only, and its application or extension to another sector size is non-trivial. Our formulation in P3 needs the ideal beampattern only and focuses on the key design requirements. Thus, the proposed design has an enlarged search space and results in the same or better performance. In addition, our scheme is applicable for an arbitrary sector size.
\end{remark}

\subsubsection{Algorithm Design} In P3, notice that  $\Delta_T$ only functions via $\mathcal{\tilde{A}}_s^{-}$. In solving P3, we first solve the problem for a given $\Delta_T\in(0,\Delta_T^{max}]$, then conduct a grid search over $\Delta_T$.

The difficulty lies in the non-convex constraint ($\bar {\rm C}$.1). By drawing lessons from \cite{semidefinite}, we can transform ($\bar {\rm C}$.1) to $|({\bf w}_{1,1}+{\bf w}_{1,2})^H{\boldsymbol{\alpha}}(\arcsin (x))|\le \sigma+\sqrt{\xi^\star}$
and $\max\{0,\sqrt{\xi^\star}-\sigma\}\le\sqrt{4{\rm Re}\{{\bf w}_{1,1}^H{\boldsymbol{\alpha}}(\arcsin (x))({\bf w}_{1,2}^H{\boldsymbol{\alpha}}(\arcsin (x)))^H\}}$ where ${\bf w}={\bf w}_{1,1}+{\bf w}_{1,2}$  \cite[Pro. 3.1]{semidefinite}. These two new constraints are multiconvex inequalities which is convex in ${\bf w}_{1,2}$ for given ${\bf w}_{1,1}$ and vice versa. Since ($\bar {\rm C}$.2) and ($\bar {\rm C}$.4) are convex, an iterative algorithm based on convex optimization \cite[Algorithm II]{semidefinite} can be used to efficiently solve P3 for an arbitrarily given $\Delta_T$. The initialization can be obtained by solving the standard semi-definite programming (SDP) problem via relaxing the inner absolute sign of ($\bar {\rm C}$.1). Due to the space limit and similarity, the pseudo-code and the proof for its convergence (refer to the end of \cite[Sec. III]{semidefinite}) are omitted here.

Based on the above analysis, the complete algorithm for Step 2 based on a grid search for the transition bandwidth $\Delta_T$ is summarized in Algorithm 1, where ${\rm d}\Delta_T$ is the step size of $\Delta_T$ and $g_{min}$ is the minimum beampattern in $\mathcal{\tilde{A}}$.
\vspace{-3mm}
\begin{algorithm}
\caption{Algorithm for obtaining ${\bf w}_1^\star$}
\label{alg1}
\begin{algorithmic}[1]
\STATE Input: $\xi^\star$, $\Delta_T^{max}$, ${\rm d}\Delta_T$, $r_s$;
\STATE Initialization: ${\bf w}_1^\star=\bf 0$; $ g_{min}=0$;
\FOR{$\Delta_T=[\Delta_T^{max},\Delta_T^{max}-{\rm d}\Delta_T,...,0)$}
\STATE Solve P3 with $\Delta_T$ and obtain $\bar {\bf w}_1$;
\STATE Denote the minimum beampattern in $\mathcal{\tilde{A}}$ as $\bar g_{min}$;
\IF{$\bar g_{min}>g_{min}$}
\STATE $g_{min}=\bar g_{min}$; ${\bf w}_1^\star=\bar {\bf w}_1$;
\ENDIF
\ENDFOR
\STATE Output: ${\bf w}_1^\star$.
\end{algorithmic}
\end{algorithm}

The complexity of Algorithm 1 depends on the step size and the convergence speed of the iterative optimization in Step 4. It is higher than that of the ZC scheme in \cite{xinmeng1} since the latter is directly based on the existing ZC sequence. As to the design in \cite{yeli}, if the reference beampattern is given, the complexity of obtaining the beamformer in general has lower complexity than that of solving P3 in our design. However, the only available reference beampattern design is for the standard sector. If the reference beampattern is not chosen appropriately, either higher complexity or worse performance will be resulted for the scheme in \cite{yeli} compared with our scheme. Our proposed method does not require such reference beampattern.

\subsection{PAPR Reduction and Hybrid implementation}\label{PAPR reduction}
The last two step of the solution procedure are to find the beamformer ${\bf w}_2^\star$ which has the same beampattern as that of ${\bf w}_1^\star$ obtained in Step 2 and has the lowest PAPR, and to decompose ${\bf w}_2^{\star}$  found in Step 3 into ${\bf W}_{RF}^{\star}$ and ${\bf w}_{BB}^{\star}$ for the practical hybrid implementation, respectively.
\subsubsection{PAPR Reduction}
By drawing lessons from \cite{same_beam}, ${\bf w}_2^\star$ can be found as follows.
First, we solve the polynomial equation: ${w}_{1,1}^{\star}+{w}_{1,2}^{\star}x+{w}_{1,3}^{\star}x^2+...+{w}_{1,M}^{\star}x^{M-1}=0$.
Without loss of geneality, the $M-1$ roots $x_1,x_2,...,x_{M-1}$ are assumed to be sorted so that $|x_i-1/x_i^*|$ is in decreasing order. The $2^{M-1}$ beamformers with the same beampattern as that of ${\bf w}_1^\star$ form the set $\mathcal{V}=\{{\bf v}|{v}_{1}+{v}_{2}x+{v}_{3}x^2+...+{v}_{M}x^{M-1}=\prod_{m=1}^{M-1}(x-a_m),a_m=x_m\text{ or }1/x_m^*, m=1,...,M-1\}$. Thus, we can search this set to find the one with the smallest PAPR.

If reduction on the search complexity is desired, the reduced set for the candidacy beamformers can be used: $\mathcal{\bar{V}}=\{{ \bar{\bf v}}|{\bar{v}}_{1}+{\bar{v}}_{2}x+{\bar{v}}_{3}x^2+...+{\bar{v}}_{M}x^{M-1}=\prod_{m=1}^{Q}(x-a_m)\prod_{m=Q+1}^{M-1}(x-a_m)\}$ with $a_m=x_m\text{ or }1/x_m^*, m=1,...,Q$ where $\{a_m,m=Q+1,...,M-1\}$ is predetermined randomly.

\subsubsection{Hybrid Implementation}
The geometric method in \cite{jianjun} can be used to construct the analog beamforming matrix ${\bf W}_{RF}^\star$ and the baseband beamformer ${\bf w}_{BB}^\star$ without performance loss for the hybrid structure with $N_{RF}>1$. First, to make the algorithm \cite[Algorithm 1]{jianjun} applicable, we give the following baseband design.
\begin{lemma}
By setting ${\bf w}_{BB}^\star=[b,b,...,b]^T$ with $b=\max\left\{|{w}_{2,i}^\star|/{N_{RF}},\forall i\right\}$, the sufficient condition of
triangle construction \cite[Theorem 1]{jianjun} is satisfied for ${\bf w}_2^\star={\bf W}_{RF}^\star{\bf w}_{BB}^\star$ with the constraint $|{[{\bf{W}}_{RF}^\star]}_{i,j}|=1,\forall i,j$.
\end{lemma}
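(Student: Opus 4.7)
The plan is to verify the condition entry by entry. I would start by writing out the row structure of the decomposition: since $[{\bf w}_{BB}^\star]_j = b$ for every $j$, the $i$-th component of ${\bf W}_{RF}^\star {\bf w}_{BB}^\star$ collapses to
\begin{equation*}
w_{2,i}^\star = b\sum_{j=1}^{N_{RF}} [{\bf W}_{RF}^\star]_{i,j}.
\end{equation*}
Since $|[{\bf W}_{RF}^\star]_{i,j}|=1$ is the only remaining constraint on the analog matrix, each row of ${\bf W}_{RF}^\star$ corresponds to the problem of writing a prescribed complex number, namely $w_{2,i}^\star/b$, as a sum of $N_{RF}$ complex numbers all of modulus one.

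Next I would recall from \cite[Theorem 1]{jianjun} what the sufficient condition for the triangle (polygon) construction actually says. In that geometric construction the resultant of $N_{RF}$ unit vectors can realize an arbitrary complex target $z$ whenever $|z|\le N_{RF}$, which is simply the triangle inequality read in the reverse direction. Hence the task reduces to verifying, for every row $i$,
\begin{equation*}
\left| w_{2,i}^\star/b \right| \le N_{RF},\qquad \text{i.e.,}\qquad |w_{2,i}^\star| \le N_{RF}\, b.
\end{equation*}

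I would close the argument by plugging in the proposed value $b=\max_{i'} |w_{2,i'}^\star|/N_{RF}$: then $N_{RF} b = \max_{i'} |w_{2,i'}^\star| \ge |w_{2,i}^\star|$ for every index $i$, so the row-wise triangle condition is met simultaneously for all $i$. Therefore, for each row one can construct $N_{RF}$ unit-modulus entries of ${\bf W}_{RF}^\star$ whose sum equals $w_{2,i}^\star/b$, which is exactly the statement that the sufficient condition in \cite[Theorem 1]{jianjun} is satisfied.

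The only mildly subtle point I anticipate is making sure the ``sufficient condition'' I invoke from the cited theorem is indeed the scalar triangle inequality per row and not a stronger joint condition involving multiple rows; the per-row decoupling here is guaranteed because the baseband vector has no free parameters other than the common scalar $b$, so no cross-row coupling can arise. Choosing $b$ to be exactly $\max_i |w_{2,i}^\star|/N_{RF}$ is also the smallest such value, which is worth a brief remark since a larger $b$ would work for the construction but would waste power, while any smaller $b$ would violate the condition at the maximising index.
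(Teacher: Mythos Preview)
Your approach is essentially the paper's, and the key inequality $|w_{2,i}^\star|\le N_{RF}\,b=\max_{i'}|w_{2,i'}^\star|$ is exactly what the paper checks. There is, however, one small gap relative to what the cited \cite[Theorem~1]{jianjun} actually requires. The sufficient condition there is the full polygon-closure inequality: after sorting the magnitudes $\{|w_{2,i}^\star|,\,|w_{BB,1}^\star|,\ldots,|w_{BB,N_{RF}}^\star|\}$ in decreasing order, the largest must not exceed the sum of the rest. You only verified the case where $|w_{2,i}^\star|$ is largest. When instead $b$ is largest (i.e., $|w_{2,i}^\star|<b$), the condition becomes $b\le (N_{RF}-1)b+|w_{2,i}^\star|$, which is immediate for $N_{RF}\ge 2$ but does need to be stated; the paper's proof handles both cases explicitly. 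Your one-sided inequality ``$|z|\le N_{RF}$ suffices'' is true for $N_{RF}\ge 2$ (and the paper does assume $N_{RF}>1$), but it is not the literal statement of the cited theorem, so you should add the one-line check of the second case rather than rely on the caveat you flagged at the end.
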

\begin{proof}
For all $i$, by sorting $\{{w}_{2,i}^\star,{w}_{BB,1}^\star,...,{w}_{BB,N_{RF}}^\star\} $ according to their magnitude in decreasing order, we have $|{w}_{2,i}^\star|\leq\sum_{j=1}^{N_{RF}}|{w}_{BB,j}|=\max\left\{|{w}_{2,j}^\star|,\forall j\right\}$ or $|{w}_{BB,1}|\leq\sum_{j=2}^{N_{RF}}|{w}_{BB,j}|+|{w}_{2,i}^\star|$ which both satisfy the sufficient condition in \cite[Theorem 1]{jianjun} for successful triangle construction.
\end{proof}

With the given ${\bf w}_{BB}^\star$, ${\bf W}_{RF}^\star$ can be easily calculated by directly using \cite[Algorithm 1]{jianjun} (the details are omitted).

\vspace{-3mm}
\subsection{Discussions}
Instead of using the ideal beampattern based design to obtain the full-digital beamformer ${\bf w}_1^\star$, another method is to directly solve the following problem:
\begin{equation*}
\small
\begin{aligned}
\textbf{P4}: & \underset{{\bf{w}}_1}{\text{maximize}}
& & \eta \\
& \hspace{24pt}\text{s.t.}
& & \left| {{{\bf{w}}_1^H} {\boldsymbol{\alpha}}(\arcsin (x))} \right| \ge \eta ,x  \in \mathcal{\tilde{A}},\text{\hspace{35pt}  ($\tilde {\rm C}$.1)}\\
&&& {\left| {{{\bf{w}}_1^H} {\boldsymbol{\alpha}}(\arcsin (x))} \right|}  \le \sqrt{r_s}, x \in \mathcal{\tilde{A}}_s^{-},\text{\hspace{20pt}  ($\bar {\rm C}$.2)}\\
&&& 0< \Delta_T \le \Delta_T^{max}. \text{\hspace{90pt}  ($\bar {\rm C}$.3)}\\
&&& {{\bf{w}}_1^H}{\bf{w}}_1 \le 1. \text{\hspace{112pt}  ($\bar {\rm C}$.4)}
\end{aligned}
\end{equation*}
where the object along with ($\tilde {\rm C}$.1) is equivalent to the original one, i.e., minimizing $\rho$. P4 can be solved via a similar iterative optimization based algorithm to Algorithm \ref{alg1}.

\begin{remark}\label{direct_P2}
Compared with P3, P4 is a more direct formulation without using the ideal beampattern in \eqref{eq_15}.
If global optimal solutions of P4 and P3 can be found, directly solving P4 should
render a better beamformer. However, since both the constraint ($\bar {\rm C}$.1) in P3 and the constraint ($\tilde {\rm C}$.1) in P4 are non-convex, only local optimal solutions can be found and the initialization point for both algorithms is crucial for the performance \cite{semidefinite}. A direct solution with a random or naive initialization for P4 tends to fall in an unsatisfactory local optimal point often. On the other hand, the formulation in P3 is to find the closest match to the ideal beampattern with practical considerations. This approach, in some sense, has similar effect to taking a good initialization in the optimization.
\end{remark}


\vspace{-4mm}
\section{Simulation and Analysis}
In this section, the performance of the proposed scheme will be validated. We consider $D=0.5$, $M=64$, $[\Theta_{1}, \Theta_{2}]=[-90^\circ, 90^\circ]$, a $60^\circ$ sector where  $\Theta_{min}=-30^\circ$, $\Theta_{max}=30^\circ$, and the maximum transition bandwidth $4^\circ$ unless mentioned otherwise. Thus, the ideal beampattern in the sector $\xi^\star=2$ from Lemma \ref{corollary 1}, and $r_s$ is set to be $\xi^\star/10^3$ which corresponds to a $30$ dB attenuation outside the sector. $3M$ discrete points with equal space among $\mathcal{\tilde{A}}\cup\mathcal{\tilde{A}}^{-}$ are selected in P3 and P4. The optimized beampattern rather than the resulted transmit power $\rho$ is used for the performance demonstration, which is independent of the values of $R$ and ${\bar{P}}_{out}$.




In Fig. 1, the proposed scheme is compared with the ZC scheme in \cite{xinmeng1}. It can be shown that if the sector size is $180^\circ$, the ZC scheme can achieve $\xi^\star=1$ only in or near the discrete angles corresponding to the spatial frequency set $(-M/2:1:M/2-1)/(MD)$, while the proposed scheme achieves $\xi^\star=1$ for all (more than $M$) discrete angles for optimization. For the smaller sector size $60^\circ$, the beampattern for ZC scheme within the sector is still 1 while that of the proposed scheme approaches $2$ well. These improvements results from two respects. First, denser angle discretization is allowed in the proposed scheme. Second, the strict constraint of constant envelop for the ZC scheme is relaxed in the proposed scheme in which the PAPR metric is treated as an additional consideration. For the considered sector size, fair comparison of the scheme in \cite{yeli} and the proposed one is unavailable due to the lack of reference beampattern for the former. Qualitative discussions on the comparison are given  in Remark \ref{remark5}. Beamformers obtained by directly solving  P4 with a random initialization still performs worse than the proposed one even with more iterations, which validates Remark \ref{direct_P2}.
\begin{figure}[t]
\setlength{\abovecaptionskip}{-2mm} 
\centering
\includegraphics[scale=0.45]{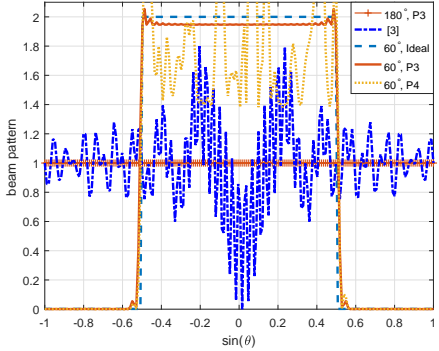}
\captionsetup{margin=5pt,font=small}
\caption{Comparison with the ZC scheme \cite{xinmeng1} and P4.}\label{Fig.4}
\vspace{-4mm}
\end{figure}

The PAPR of all beamformers with the same beampattern as that of ${\bf w}_1^\star$ are shown in Fig. 2 where $Q=8$, i.e., 256 beamformers besides ${\bf w}_1^\star$ (with beamformer index 1) are created. It can be shown that after Step 3, the PAPR of ${\bf w}_2^\star$ can be decreased by $56\%$. Further, ${\bf w}_2^\star$ can be decomposed into ${\bf W}_{RF}^\star$ and ${\bf w}_{BB}^\star$ without error by Step 4 while the details are omitted here.
\begin{figure}[t]
\setlength{\abovecaptionskip}{-2mm}
\centering
\includegraphics[scale=0.45]{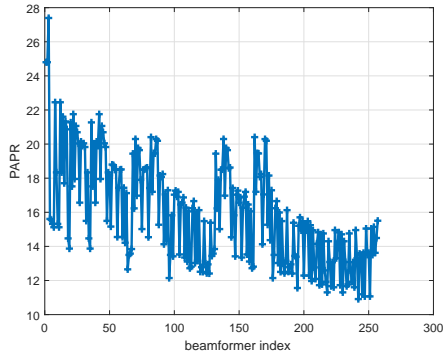}
\captionsetup{margin=5pt,font=small}
\caption{The improvement of PAPR due to Step 3.}\label{Fig.4}
\vspace{-7mm}
\end{figure}

\vspace{-5mm}
\section{Conclusion}
In this work, a beamforming design was proposed for massive MIMO public channel with any sector size which aims to minimize the transmit power while guaranteeing the QoS of all users. The ideal beampattern was first derived via Parseval Identity, based on which a non-convex but multiconvex problem was formulated which results in the full-digital beamformer with minimum gap with the ideal beampattern. In addition, the PAPR performance was improved through a search based on the same beampattern theorem. Finally, the full-digital beamformer was perfectly implemented in the hybrid structure through the triangle construction with the predefined baseband beamformer. Simulations validated the advantages of the proposed scheme over existing ones.


\vspace{-3mm}
\section*{Appendix A: Proof of Lemma \ref{corollary 1}}
\label{Appendix A}
We first give a upper bound on the sum beampattern within the sector, based on which the optimal beampattern is provided.

From the Parseval Identity \cite{Oppenheim}, we have\vspace{-0.2cm}
\begin{equation}\label{10_170324}
\small
\frac{1}{{2\pi }}{\int_{-\pi} ^{ \pi } {\left| {F\left( \Omega  \right)} \right|} ^2}d\Omega  = \sum\limits_{m = 0}^{M - 1} {{{\left| {w_m} \right|}^2}}  = {{\bf{w}}^H}{\bf{w}} = 1,\vspace{-0.2cm}
\end{equation}
\vspace{-0.2cm}where
{\small \vspace{-0.2cm}
\begin{eqnarray}\label{equivalence_170323}
  \hspace{-10pt}|F\left( \Omega  \right)|^2  &\hspace{-8pt}=\hspace{-8pt}& \left|\sum_{m = 0}^{M - 1} {w_m^{*}{e^{ - j\Omega m}}}\right|^2\hspace{-2pt}=\hspace{-2pt}\left|{\bf w}^H{{\boldsymbol \alpha}\left(\arcsin\left(\frac{\Omega}{2\pi D}\right)\right)}\right|^2 \nonumber\\
  &\hspace{-8pt}=\hspace{-8pt}& g\left(\arcsin\left(\frac{\Omega}{2\pi D}\right)\right).
\end{eqnarray}}Therefore,
{\small \vspace{-0.2cm}
\begin{eqnarray}\label{eq_14}
  \hspace{-15pt}\int_{{\sin(\Theta _{\min })}}^{{\sin(\Theta _{\max })}} {g\left( \arcsin(x)  \right)} dx \hspace{-10pt}&\mathop  = \limits^{(a)}&\hspace{-10pt} \frac{1}{{2\pi D }}{\int_{2\pi D\sin \left( {{\Theta _{\min }}} \right)}^{2\pi D\sin \left( {{\Theta _{\max }}} \right)} {\left| {F\left( \Omega  \right)} \right|} ^2}d\Omega \nonumber\\
   &\le&\hspace{-10pt} \frac{1}{{2\pi D }}{\int_{-\pi} ^{ \pi } {\left| {F\left( \Omega  \right)} \right|} ^2}d\Omega  = \frac{1}{D},\vspace{-0.2cm}
\end{eqnarray}}where (a) follows from defining  $x\triangleq\frac{\Omega}{2\pi D}$. If $F\left( \Omega  \right) =0$ for $\Omega  \in \left\{ {\left[ -\pi, 2\pi D\sin \left( {{\Theta _{\min }}} \right)\right] \cup \left[ 2\pi D\sin \left( {{\Theta _{\max }}} \right), \pi\right]} \right\}$, $\int_{{\sin(\Theta _{\min })}}^{{\sin(\Theta _{\max })}} {g\left( \arcsin(x)  \right)} dx$ approaches the upper bound $1/D$.

Secondly, we prove that $g\left( \arcsin(x)  \right)$ should be constant, i.e., $g\left( \arcsin(x)\right)=c,\forall x \in \mathcal{\tilde{A}}$. Assume the smallest beampattern is $g\left( \arcsin(x)  \right) = c_1<c$ for $x \in \left[\theta_1^{min} ,\theta_1^{max} \right]$. Since users' PASs have random interval and profile which belong to $\mathcal{F}$, if there is a User $i$ with PAS interval $\left[\theta_1^{min} ,\theta_1^{max} \right]$, its outage probability is larger than \vspace{-0.2cm}
\[{\rm{Pr}}\hspace{-0.1cm}\left\{\hspace{-0.1cm} {{{\left| {h_i^{iid}} \right|}^2} \hspace{-0.2cm}\int_{{\sin(\Theta_{min})}}^{{\sin(\Theta_{max})}}\hspace{-0.2cm} {\frac{{f}_i\left( \arcsin(x)  \right)c}{\sqrt{1-x^2}}dx }  \hspace{-0.1cm}<\hspace{-0.1cm} \frac{{{2^R} \hspace{-0.1cm}- \hspace{-0.1cm}1}}{\rho }} \hspace{-0.1cm}\right\}\hspace{-0.1cm}\]
due to \eqref{eq_17_1} and a larger $\rho$ is needed to guarantee the worst case performance. On the other hand, with $g\left( \arcsin(x)  \right)  =  c, \forall x \in \mathcal{\tilde{A}}$, users with any PAS belonging to $\mathcal{F}$ have the same outage performance, increasing $\rho$ for users with worst case performance is unnecessary.

Moreover, due to \eqref{eq_17_1} a larger $c$ can result in a smaller $\rho$ while maintaining the performance. Directly based on \eqref{eq_14}, we have $c\leq(D(\sin(\Theta _{\max })-\sin(\Theta _{\min })))^{-1}$ and the equality holds for the beampattern in \eqref{eq_15}.



\ifCLASSOPTIONcaptionsoff
  \newpage
\fi

\bibliographystyle{IEEEtran}

\begin{thebibliography}{30}
\bibitem{Hoydis}
J. Hoydis, S. ten Brink and M. Debbah, ``Massive MIMO in the UL/DL of cellular networks: how many antennas do we need?," in \emph{IEEE J. Sel. Areas Commun.}, vol. 31, no. 2, pp. 160-171, Feb. 2013.
\bibitem{chengzhang}
C. Zhang, Y. Huang, Y. Jing, et al., ``Sum-rate analysis for massive MIMO downlink with joint statistical beamforming and user scheduling," in \emph{IEEE Trans. Wireless Commun.}, vol. 16, no. 4, pp. 2181-2194, Apr. 2017.
\bibitem{xinmeng1}
X. Meng, X. Gao and X. G. Xia, ``Omnidirectional precoding based transmission in massive MIMO systems," in \emph{IEEE Trans. Commun.}, vol. 64, no. 1, pp. 174-186, Jan. 2016.
\bibitem{Meng_Omnidirectional}
X. Meng, X. G. Xia, X. Gao, ``Omnidirectional space-time block coding for common information broadcasting in massive MIMO systems," \emph{to appear in IEEE Trans. Wireless Commun.}, 2017.
\bibitem{Sidiropoulos-Transmit}
N. D. Sidiropoulos, T. N. Davidson and Zhi-Quan Luo, ``Transmit beamforming for physical-layer multicasting," in \emph{IEEE Trans. Signal Process.}, vol. 54, no. 6, pp. 2239-2251, June 2006.
\bibitem{Lozano-long}
A. Lozano, ``Long-term transmit beamforming for wireless multicasting," 2007 \emph{IEEE ICASSP}, Honolulu, 2007, pp. 417-420.
\bibitem{Larsson_Joint}
E. G. Larsson and H. V. Poor, ``Joint beamforming and broadcasting in massive MIMO," in \emph{IEEE Trans. Wireless Commun.}, vol. 15, no. 4, pp. 3058-3070, Apr. 2016.
\bibitem{xuezhiyang}
X. Yang, W. Jiang and B. Vucetic, ``A random beamforming technique for broadcast channels in multiple antenna systems," \emph{IEEE VTC Fall}, San Francisco, CA, 2011.
\bibitem{yeli}
D. Qiao, H. Qian and G. Y. Li, ``Broadbeam for massive MIMO systems," in \emph{IEEE Trans. Signal Process.}, vol. 64, no. 9, pp. 2365-2374, May 2016.
\bibitem{same_beam}
A. Hassanien, S. A. Vorobyov and A. Khabbazibasmenj, ``Transmit radiation pattern invariance in MIMO radar with application to DOA estimation," in \emph{IEEE Signal Process. Lett.}, vol. 22, no. 10, pp. 1609-1613, Oct. 2015.
\bibitem{Zhang_On}
J. Zhang, L. Dai, S. Sun, et al., ``On the spectral efficiency of massive MIMO systems with low-resolution ADCs," in \emph{IEEE Commun. Lett.}, vol. 20, no. 5, pp. 842-845, May 2016.
\bibitem{Zhang_Performance}
J. Zhang, L. Dai, Z. He, et al., ``Performance analysis of mixed-ADC massive MIMO systems over Rician fading channels," in \emph{IEEE J. Sel. Areas Commun.}, vol. 35, no. 6, pp. 1327-1338, June 2017.
\bibitem{Spatially}
O. E. Ayach, S. Rajagopal, S. Abu-Surra, et al., ``Spatially sparse precoding in millimeter wave MIMO systems," in \emph{IEEE Trans. Wireless Commun.}, vol. 13, no. 3, pp. 1499-1513, Mar. 2014.
\bibitem{molisch1}
X. Zhang, A. F. Molisch and S. Kung, ``Variable-phase-shift-based RF-baseband codesign for MIMO antenna selection," in \emph{IEEE Trans. Signal Process.}, vol. 53, no. 11, pp. 4091-4103, Nov. 2005.
\bibitem{jianjun}
J. Zhang, Y. Huang, J. Wang, et al., ``Per-antenna constant envelope precoding and antenna subset selection: a geometric approach," in \emph{IEEE Trans. Signal Process.}, vol. 64, no. 23, pp. 6089-6104, Dec. 2016.
\bibitem{Karlsson_Broadcasting}
M. Karlsson, E. Bj{\"o}rnson and E. G. Larsson, ``Broadcasting in massive MIMO using OSTBC with reduced dimension," 2015 IEEE ISWCS, Brussels, 2015, pp. 386-390.
\bibitem{array}
H. L. Van Trees. Detection, estimation, and modulation theory, optimum array processing. John Wiley $\&$ Sons, 2004.
\bibitem{global}
K. Guney,  M. Onay. ``Optimal synthesis of linear antenna arrays using a harmony search algorithm," \emph{Expert Systems with Applications}, 2011, 38(12): 15455-15462.
\bibitem{Oppenheim}
A. V. Oppenheim, A. S. Wilsky, and I. T. Young, Signals and Systems. Englewood Cliffs, NJ, USA: Prentice Hall PTR, 1983, vol. 87.
\bibitem{semidefinite}
F. Wang, V. Balakrishnan, P. Y. Zhou, et al., ``Optimal array pattern synthesis using semidefinite programming," in \emph{IEEE Trans. Signal Process.}, vol. 51, no. 5, pp. 1172-1183, May 2003.







\end{thebibliography}

\vspace{-3mm}

\end{document}